\documentclass[12pt]{article}
\pdfoutput=1
\usepackage[style=apa,backend=biber,natbib=true,uniquename=false]{biblatex}
\addbibresource{ref.bib}

\newcommand*\samethanks[1][\value{footnote}]{\footnotemark[#1]}

\usepackage{amsmath}
\usepackage{amssymb}
\usepackage{graphicx,mwe}
\graphicspath{ {./images/} }
\usepackage{authblk}
\usepackage{amsthm}
\usepackage{float}
\usepackage{faktor}
\usepackage{mathtools}
\usepackage{diagbox}
\usepackage{soul}
\usepackage[dvipsnames]{xcolor}
\usepackage{bm}
\usepackage{algorithm}
\usepackage{breqn}
\usepackage{algpseudocode}
\usepackage{setspace}
\usepackage{makecell}
\usepackage{multirow}
\onehalfspacing
\usepackage{comment}
\usepackage[title]{appendix}
\usepackage[a4paper, portrait, margin=1in]{geometry}
\usepackage{framed}
\setlength{\OuterFrameSep}{0pt}
\usepackage[flushleft]{threeparttable}
\usepackage[most]{tcolorbox}
\usepackage{xcolor}
\colorlet{shadecolor}{orange!15}
\setcounter{tocdepth}{2}
\newtheorem{definition}{Definition}

\newtheorem{theorem}{Theorem}

\usepackage{enumerate}
\usepackage{hyperref}
\usepackage{tikz}
\usetikzlibrary{arrows}
\makeatletter
\newsavebox\myboxA
\newsavebox\myboxB
\newlength\mylenA
\usepackage{graphicx} 
\usepackage{caption,subcaption}

\providecommand{\keywords}[1]
{
  \small	
  \textbf{\textit{Keywords---}} #1
}

\usepackage{xcolor}
\definecolor{light-gray}{gray}{0.95}
\newcommand{\code}[1]{\colorbox{light-gray}{\texttt{#1}}}

\newcommand{\E}[1]{\mathbb{E}\left[#1\right]}
\newcommand{\Prob}[1]{\mathbb{P}\left(#1\right)}

\newcommand*\xoverline[2][0.75]{%
    \sbox{\myboxA}{$\m@th#2$}%
    \setbox\myboxB\null
    \ht\myboxB=\ht\myboxA%
    \dp\myboxB=\dp\myboxA%
    \wd\myboxB=#1\wd\myboxA
    \sbox\myboxB{$\m@th\overline{\copy\myboxB}$}
    \setlength\mylenA{\the\wd\myboxA}
    \addtolength\mylenA{-\the\wd\myboxB}%
    \ifdim\wd\myboxB<\wd\myboxA%
       \rlap{\hskip 0.5\mylenA\usebox\myboxB}{\usebox\myboxA}%
    \else
        \hskip -0.5\mylenA\rlap{\usebox\myboxA}{\hskip 0.5\mylenA\usebox\myboxB}%
    \fi}
\makeatother
\setlength\parindent{0pt}

\title{Estimating Policy Effects in a Social Network with Independent Set Sampling}

\author{Eugene T.Y. Ang\thanks{Institute of Operations Research and Analytics (IORA), National University of Singapore (NUS)} , Prasanta Bhattacharya\thanks{Institute of High Performance Computing (IHPC), Agency for Science, Technology and Research (A*STAR), 1 Fusionopolis Way, \#16-16 Connexis, Singapore 138632, Republic of Singapore} , Andrew E.B.  Lim\samethanks[1] \thanks{Department of Analytics and Operations, National University of Singapore (NUS)} \thanks{Department of Finance, National University of Singapore (NUS)} \thanks{Andrew E.B. Lim is supported by the  Ministry of Education, Singapore, under its 2021 Academic Research Fund Tier 2 grant call (Award ref: MOE-T2EP20121-0014).}}

\date{}

\begin{document}

\maketitle

\begin{abstract}
Evaluating the impact of policy interventions on respondents who are embedded in a social network is often challenging due to the presence of network interference within the treatment groups, as well as between treatment and non-treatment groups throughout the network. In this paper, we propose a novel empirical strategy that combines network sampling based on the identification of independent sets with a stochastic actor-oriented model (SAOM) to infer the direct and net effects of a policy. By assigning respondents from an independent set to the treatment, we are able to block direct spillover of the treatment among the treated respondents for an extended period of time, during which the direct effect of the treatment can be isolated from the associated network interference. We empirically demonstrate this using a simulation-based evaluation of a fictitious policy implementation using both real-life and generated networks, and use a counterfactual approach to estimate the treatment effect of the policy. Our results highlight the effectiveness of our proposed empirical strategy, and notably, the role of network sampling techniques in influencing the evaluation of policy effects. The findings from this study have the potential to help researchers and policymakers with planning, designing, and anticipating policy responses in a networked society. 
\end{abstract}

\keywords{Independent set sampling, Stochastic actor-oriented model, Policy evaluation}

\newpage

\section{Introduction}
Evaluating the impact of policies is critical to good governance in both offline and online spaces \parencites{athey2017, mergoni2021, sanderson2002evaluation}. Policymakers need to assess if a policy has achieved its intended outcomes and identify the factors that contributed to, or hindered its effectiveness. Accurate estimation of policy outcomes is therefore critical to the design and implementation of future ones. To estimate policy impact, evaluation studies use either experimental or quasi-experimental approaches \parencites{gertler2016impact, shadish2002experimental}{miller2020experimental} that involve a controlled (or quasi-controlled) exposure of groups to the specific intervention being studied. However, \emph{network interference} can be a major obstacle in such studies because the targets of policy interventions are often socially embedded entities, such as individuals or organizations \parencites{huggins2001inter,borgatti2003network}{GRAF2020}, that do not exist in isolation \parencites{basse2017}{viviano2022policy}. When policies are implemented in a networked population there is a possibility that entities within the network are influenced by the behavior of their peers. 
For example, economic policies to boost investment can accelerate job creation in targeted sectors, and improve consumption levels. However, such policies can also have unintended spillover effects, both positive and negative, in other sectors of the economy  (\cite{liu2008foreign, crespo2007determinant}). This can lead to amplification or mitigation of the intended policy outcomes, and result in inaccurate policy or business decisions in future. To avoid such uncertainty in estimating outcomes, policymakers need to account for any network effects stemming from the policy \parencite{forastiere2024causal}. In this study, we introduce an empirical strategy to aid researchers and policymakers evaluate the effects of a policy on the target behavior by decoupling the \textit{direct effect} of the policy from \textit{indirect effects} stemming from network interference within the treated group. \\

As policymakers are often faced with resource and feasibility constraints, they are unable to implement and evaluate interventions on the entire population of interest. Hence, a key consideration in policy evaluation studies is the selection of an appropriate sampling technique for assigning individuals to the intervention. This consideration is particularly salient since the choice of the initial seeding or sampling strategy determines the resulting network structures. This, in turn, leads to different patterns of network interference and spillover effects, which then affect the assessment of the policy treatment effect \parencites{viviano2024policy, aronow2017, schwarz2021}. 
In this study, we propose a novel network sampling and assignment strategy as part of a broader policy evaluation methodology. This technique is agnostic to both context and policy type and can hence be applied flexibly to a wide range of societal and organizational policies. It is based on the identification of an independent set that includes respondents who are not directly connected to one another via a contact network. By selectively exposing this independent set to the treatment, we are able to block any treatment spillover via peer influence among policy-exposed respondents for an extended period of time after the policy intervention. This can provide policymakers with an opportunity to decouple the direct policy effects from the indirect network-induced effects. \\ 

In this study, we determine the relative efficacy of these sampling strategies by comparing the \emph{time to first change} (TTFC), which we define as the time taken for the focal behavior of the policy-exposed respondents to change \textit{after} the policy implementation. 
As we show later in this paper, TTFC offers a way to assess the time taken for network interference to alter the behavior of the treatment group. So, a sampling strategy with a larger TTFC within the treatment group will be better suited for estimating the direct effect of the policy. We run an illustrative policy implementation exercise based on a real-world dataset, and use a stochastic actor-oriented model (SAOM) to compare the empirical average TTFC across different sampling techniques and network structures. Through this exercise, we demonstrate that the independent set sample experiences a delayed onset of network interference when compared to other types of sampling strategies. Subsequently, we show that the direct effect of the policy can be better decoupled from the associated network effect at specific time thresholds based on these empirically estimated TTFC durations. \\

We also highlight the robustness of our proposed independent set-based policy evaluation method to variations in independent set composition and network structure. For instance, past studies have highlighted the structural importance of considering centrality measures in sampling strategies (e.g., \cite{maiya2010online}). We incorporate several centrality measures, such as degree and closeness centralities in the construction of our independent set sample, and provide a comparative analysis of their effectiveness. We also replicate our method on contact network structures produced using a range of random graph-based models (e.g., Erd\"{o}s-Renyi, Watts-Strogatz and scale-free network models) to highlight the robustness of our findings to changes in topology. In our simulation exercises, we observe that the network size also contributes to the effectiveness of our proposed sampling strategy. For instance, when the underlying contact network is very sparse, sampled respondents have very few connections among them irrespective of the sampling strategy, leading to little differentiation in model performance among them. \\

In the following section, we discuss past work on estimating policy impact within networked contexts, and offer a comparative analysis of prior empirical methods for policy evaluation. Subsequently, in Section \ref{sec:methodology}, we introduce our proposed method, and the mechanics of our simulation-based models to infer treatment effects. In Section \ref{sec:analysis}, we implement this method on a real-world social network dataset to demonstrate how the direct effect of a (fictitious) policy can be estimated in the presence of network interference. Additionally, we present a set of robustness analyses to test the sensitivity of our method to changes in sample composition and network structure. Finally, in Section \ref{sec:discussion}, we conclude by summarizing the key findings from this study, some limitations of our current approach, and the plan for future extensions of this work.

\section{Policy Evaluation in Networked Contexts}\label{sec:relatedwork}
Randomized controlled trials (RCT) are often considered to be the gold standard among methods to evaluate interventions \parencite{Hariton2018}. However, RCTs are often difficult to implement within a social network due to methodological challenges such as network interference \parencites{karwa2018}{schwarz2021}, and other practical or ethical concerns \parencite{goldstein2018ethical}. Hence, researchers often rely on quasi-experimental or observational studies to identify the causal effects of an intervention. \\

Recent studies on policy interventions in networked contexts have leveraged the dynamics of interpersonal connections and behavior to address key outcomes such as managing the effects of COVID-19 \parencite{robins2023multilevel}, managing natural resources to mitigate climate change \parencite{frank2023network}, and fostering scientific collaboration and productivity \parencite{sciabolazza2020connecting}. Tackling the issue of network interference in policy evaluation can often require large and complex experiment designs \parencite{athey2017}. A number of past studies have formulated randomization-based frameworks for estimating causal effects 
\parencites{aronow2017}{baird2018}. 
Others have attempted to estimate such effects 
using non-parametric and regression-based estimators under specific distributional assumptions (e.g., \cite{leung2020treatment}). 
Several network randomization studies make use of cluster sampling \parencite{hu2013}, which offers a way to assign entire communities of individuals to treatment groups, as opposed to conventional techniques that randomize at an individual level. This method of randomizing the assignment of clusters to experimental conditions can help reduce bias in the focal estimates \parencites{ugander2013}{EcklesKarrerUgander+2017}{ugander2020}. \\

However, the assignment of respondents to different exposure groups is non-trivial in a networked setting. Even if we are able to generate clusters that minimize interactions across groups (e.g., \cite{blondel2008,FORTUNATO2010}), 
the stable unit treatment value assumption (SUTVA) might be violated due to the presence of network interference \textit{within} the groups \parencite{ugander2013}. In other words, we cannot determine if changes in individuals' behavior are solely due to their own treatment or due to influence from their treated peers. 
This inherent difficulty in eliminating network interference within the treated clusters might lead to a biased estimate of the policy effect, as well as other network effects \parencite{basse2017}. Furthermore, it is difficult to quantify the degree of interference in a networked context. Studies often make prior assumptions on the extent of interference and the properties of the underlying network, but this can limit the generality of their approaches. 
Our work contributes to the current policy evaluation literature in a number of ways. Specifically, our proposed approach is context- and policy-agnostic and does not require strong assumptions on the type of treatment or network structure to estimate and decouple the direct policy effect from the network interference, within the treated group. This makes it appropriate for use in a wide range of organizational and societal policy contexts and provides policymakers greater flexibility to estimate policy effects. Our proposed strategy does, however, require the policy implementer to have access to the entire contact network as well as an understanding or intuition of the potential network effects, which is instrumental for specifying the model, as we explain later.  \\


We also note that accounting for network effects in \textit{observational studies} is complicated by various empirical challenges, such as the difficulty in separating social influence from other confounders like homophily and shared contexts \parencite{shalizi2011}. Previous studies have attempted to address these limitations using various methods such as linear-in-means models \parencites{manski1993,kline2012, blume2015linear}, stochastic actor-oriented models (SAOM) \parencites{snijders1996stochastic,snijders2007,snijder_2010} and exponential random graph models (ERGM) \parencites{wasserman1996}\footnote{Including extensions such as longitudinal and separable temporal exponential random graph models \parencites{krivitsky2019stergm, koskinen2015simultaneous}}.  We note that an important limitation of the linear-in-means model is its use of aggregated endogenous variables \parencite{manski1993}. This reduces available information on alters' characteristics, as well as the strength and direction of peer effects that often vary as a function of such individual-level characteristics and interaction contexts. Moreover, earlier studies using the linear-in-means model have often assumed that the social network structure is exogenous and static, which is a strong assumption for many real-world social networks that tend to be inherently dynamic. Hence, such linear-in-means models are not well suited to account for the joint evolution of individual behavior and network change. We note, however, that recent studies have used variants of the model that consider the endogeneity of the network structure \parencites{johnsson2017}{JOCHMANS2022}. Nonetheless, such models rely on strong assumptions to tackle endogeneity, such as the validity of certain instrumental variables. Moreover, the linear-in-means model assumes that endogenous variables are additive and linear. However, peer effects can often depend on the characteristics and behaviors of multiple alters in complex and non-linear ways. In such contexts, a linear model might generally not be appropriate for estimating the underlying peer effects.\\

For modeling complex mechanisms on dynamic social networks, researchers have commonly used the SAOM as well as temporal variants of the exponential random graph model (e.g., STERGM, TERGM). Other event-based models (e.g., \cite{butts2008relational, perry2013point, stadtfeld2011analyzing}) and dynamic network actor models (DyNAM) \parencites{stadtfeld2011analyzing, stadtfeld2017dynamic} can also be used for modeling instantaneous interactions in temporal networks. However, since the current study focuses on estimating policy effects after a set period of time, we contend that ERGM- or SAOM-based models are better suited for our case. Further, in selecting between an ERGM or a SAOM for our proposed method, Block et. al. (2016) recommend employing one whose assumptions better fit the given social and network processes \parencite{block2016}. In our context, we investigate the dynamic or co-evolving changes in individual behavior and network ties due to policy implementation, and hence, the SAOM is preferred over the standard ERGM. This is consistent with past studies which have incorporated SAOMs to generate insights that can better inform policy interventions by analyzing the mechanisms (e.g., peer influence) which drive social outcomes (e.g., adolescents' drinking behavior) (e.g., \cite{ivaniushina2019peer}). Hence, in our work, we propose an empirical strategy that integrates sampling techniques with an SAOM to estimate the direct and net effect of a policy. Specifically, by simulating the network dynamics using a SAOM, we are able to track the change in focal behavior adoption among the treated respondents in the post-policy period, which allows us to detect the onset of any potential network interference.\\

Lastly, an important consideration in policy evaluation studies is the propagation of policy effects in a social network, where the spread of policy-linked behavior can be modeled as a contagion process. Several studies have investigated methods to maximize the spread of influence and ideas through carefully selected nodes (e.g., \cite{kempe_2003, valente1999accelerating}). Although the current paper focuses on the estimation of policy effects and not on forecasting contagion-based policy outcomes, our method of using the time to first change (TTFC) to disentangle the direct policy effects from network interference shares some similarities with how contagion or social influence processes have been modeled in the literature. 

\section{Proposed Method}\label{sec:methodology}
In a typical policy evaluation exercise, researchers are primarily interested in measuring the \emph{direct} policy effect, i.e., the behavioral change of a respondent as a result of being exposed to the policy. However, as discussed earlier, the presence of network interference among the policy-exposed respondents can confound the measurement of this direct effect. Network interference can manifest via two indistinguishable forms of \emph{indirect effects}, namely, the \emph{indirect network effect} and \emph{indirect policy-linked network effect}, which are the behavioral changes of a respondent caused by peer influence from an unexposed vs. exposed respondent, respectively. In this section, we present our empirical strategy to disentangle the direct effect of the policy intervention from the indirect effects via the social network. We begin by introducing the independent set sampling technique, and explain how it can be suitably incorporated into our evaluation method by selectively exposing respondents in the independent set to the policy. Subsequently, we explain how we can combine this sampling and assignment procedure with a stochastic actor-oriented model (SAOM) to estimate the direct effect of a policy, in the presence of network-behavior co-evolution. 

\subsection{Independent set sampling}\label{sec:indpsetsampling}
Given access to a social or contact network of the policy-relevant population, we can compute an independent set of respondents from the population, as defined below.
\begin{definition}[Independent Set]
A set of nodes $S$ is called an \textbf{independent set} if no two vertices in this set $S$ are adjacent to each other in the given network.
\end{definition}
For an independent set of respondents, these selected respondents are not connected to one another. We provide an example of selecting independent sets from a cyclic graph of 6 nodes, denoted as $C_6$, in Figure \ref{Fig1}.

\begin{figure}[H]
\centering
\begin{subfigure}[b]{0.4\linewidth}
\begin{center}
\begin{tikzpicture}[y=.3cm, x=.3cm,font=\small] \draw [latex-latex](0,4) -- (3,8); \draw [latex-latex](8,8) -- (11,4); \draw [latex-latex](3,0) -- (8,0); \draw [latex-latex,](0,4) -- (3,0); \draw [latex-latex,](8,0) -- (11,4); \draw [latex-latex](3,8) -- (8,8); \filldraw[fill=black!40,draw=black!80] (0,4) circle (5pt) ; \filldraw[fill=white!40,draw=black!80] (3,0) circle (5pt) ; \filldraw[fill=white!40,draw=black!80] (3,8) circle (5pt) ; \filldraw[fill=white!40,draw=black!80] (8,0) circle (5pt) ; \filldraw[fill=white!40,draw=black!80] (8,8) circle (5pt) ; \filldraw[fill=black!40,draw=black!80] (11,4) circle (5pt) ; 
\end{tikzpicture}
\end{center}
\caption{Maximal independent set}\label{maximalindependentset}
\end{subfigure}
\begin{subfigure}[b]{0.4\linewidth}
\begin{center}
\begin{tikzpicture}[y=.3cm, x=.3cm,font=\small] \draw [latex-latex](0,4) -- (3,8); \draw [latex-latex](8,8) -- (11,4); \draw [latex-latex](3,0) -- (8,0); \draw [latex-latex,](0,4) -- (3,0); \draw [latex-latex,](8,0) -- (11,4); \draw [latex-latex](3,8) -- (8,8); \filldraw[fill=black!40,draw=black!80] (0,4) circle (5pt) ; \filldraw[fill=white!40,draw=black!80] (3,0) circle (5pt) ; \filldraw[fill=white!40,draw=black!80] (3,8) circle (5pt) ; \filldraw[fill=black!40,draw=black!80] (8,0) circle (5pt) ; \filldraw[fill=black!40,draw=black!80] (8,8) circle (5pt) ; \filldraw[fill=white!40,draw=black!80] (11,4) circle (5pt) ; 
\end{tikzpicture}
\end{center}
\caption{Maximum independent set}\label{maximumindependentset}
\end{subfigure}
\caption{Selecting the shaded nodes from a $C_6$ graph to form different independent sets}\label{Fig1}
\end{figure}

Depending on the selection procedure, we can obtain different independent sets of varying sizes. In Figure \ref{maximalindependentset}, the independent set obtained is maximal as a further selection of any nodes breaks the independence criterion. However, this set is not the largest, as shown in Figure \ref{maximumindependentset}. We note that there is an upper bound on the maximality of the size of the independent set, attributed to Kwok \parencite{west1996introduction}, as stated in the following theorem.

\begin{theorem}[Kwok Bound]
Let $G$ be a graph on $n$ vertices with $e$ edges and let $\Delta$ be the maximum vertex degree. Then, the independence number $\alpha(G)$, which is the size of the largest independent set of $G$, has the following upper bound,
$$\alpha(G) \leq n - \frac{e}{\Delta}$$
\end{theorem}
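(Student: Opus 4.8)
The plan is to argue directly from a largest independent set via an edge-counting (double-counting) bound. Let $S \subseteq V(G)$ be an independent set with $|S| = \alpha(G)$, and consider its complement $T = V(G) \setminus S$, so that $|T| = n - \alpha(G)$. The first key observation is that because $S$ is independent, no edge of $G$ can have both of its endpoints in $S$; hence \emph{every} edge of $G$ has at least one endpoint in $T$. This is the structural fact that drives everything else, and it requires nothing beyond the definition of an independent set.

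Next I would bound the number of edges incident to $T$. Since each edge contributes at least one endpoint to $T$, we have $e \leq \sum_{v \in T} \deg(v)$ (an edge inside $T$ is counted twice, an edge between $S$ and $T$ once, which is why the inequality goes this way). Then I bound each term by the maximum degree: $\sum_{v \in T} \deg(v) \leq |T|\,\Delta = (n - \alpha(G))\,\Delta$. Chaining these gives $e \leq (n - \alpha(G))\,\Delta$.

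Finally I would rearrange: dividing by $\Delta > 0$ yields $e/\Delta \leq n - \alpha(G)$, i.e. $\alpha(G) \leq n - e/\Delta$, which is exactly the claimed bound. A small remark would handle the degenerate case $\Delta = 0$ (then $G$ has no edges, $e = 0$, and $\alpha(G) = n$, so the inequality holds trivially with the convention that the term $e/\Delta$ vanishes).

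I do not anticipate a genuine obstacle here; the only thing to be careful about is the \emph{direction} of the inequality $e \leq \sum_{v \in T}\deg(v)$ — one must make sure to count endpoints-in-$T$ rather than endpoints-in-$S$, since edges internal to $T$ are double-counted and this is precisely what keeps the estimate valid rather than reversed. The rest is routine.
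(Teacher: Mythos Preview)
Your proposal is correct and follows essentially the same approach as the paper: take a maximum independent set, observe that every edge must be incident to at least one vertex in its complement, bound the number of such incidences by $\Delta \cdot |V \setminus S| = \Delta(n - \alpha(G))$, and rearrange. Your write-up is in fact a bit more careful than the paper's, since you make the direction of the edge-counting inequality explicit and you handle the degenerate case $\Delta = 0$.
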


There is also a lower bound on the size of the independent set, as attributed to Caro and Wei \parencites{caro1979new}{wei1981lower},

\begin{theorem}[Caro-Wei Bound]
Let $G = (V, E)$ be a graph and let $d_v$ be the degree of vertex $v$. Then, the independence number $\alpha(G)$, which is the size of the largest independent set of $G$, has the following lower bound,
$$\alpha(G) \geq \sum_{v \in V}\frac{1}{1+d_v}$$
\end{theorem}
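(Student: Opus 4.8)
The plan is to prove the Caro--Wei bound by the probabilistic method, using a uniformly random vertex ordering (the classical one-line argument, essentially Alon--Spencer folklore). First I would draw a permutation $\pi$ of $V$ uniformly at random and define
$$I_\pi = \{\, v \in V : \pi(v) < \pi(u) \text{ for every } u \text{ adjacent to } v \,\},$$
i.e.\ the set of vertices that precede all of their neighbours in the order $\pi$. The first step is to check that $I_\pi$ is always an independent set: if $u$ and $v$ were adjacent and both lay in $I_\pi$, then $v$ would precede $u$ (because $v \in I_\pi$ and $u$ is a neighbour of $v$) and simultaneously $u$ would precede $v$, a contradiction. So $I_\pi$ is independent for \emph{every} realization of $\pi$, which already gives $\alpha(G) \geq |I_\pi|$ deterministically.

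Next I would compute $\E{|I_\pi|}$ by linearity of expectation, writing $|I_\pi| = \sum_{v \in V} \mathbf{1}[v \in I_\pi]$, so that $\E{|I_\pi|} = \sum_{v \in V} \Prob{v \in I_\pi}$. For a fixed vertex $v$, the event $v \in I_\pi$ depends only on the relative order under $\pi$ of the $d_v + 1$ vertices in $\{v\} \cup N(v)$; since $\pi$ restricted to this set is a uniform random permutation of it, each of those $d_v+1$ vertices is equally likely to come first, and hence $\Prob{v \in I_\pi} = \tfrac{1}{1+d_v}$. This yields $\E{|I_\pi|} = \sum_{v \in V} \tfrac{1}{1+d_v}$.

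Finally, since a random variable cannot lie strictly below its own expectation with probability one, there is at least one permutation $\pi^\*$ with $|I_{\pi^\*}| \geq \sum_{v \in V} \tfrac{1}{1+d_v}$; as $I_{\pi^\*}$ is an independent set, we conclude $\alpha(G) \geq \sum_{v \in V} \tfrac{1}{1+d_v}$.

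I expect the only real subtlety to be the conditional-probability claim $\Prob{v \in I_\pi} = 1/(1+d_v)$: one must justify that a uniform random permutation of $V$, when restricted to the $(d_v+1)$-element subset $\{v\}\cup N(v)$, induces a uniform random permutation of that subset, so that $v$ is the minimum element there with probability exactly $1/(d_v+1)$. Everything else — independence of $I_\pi$, linearity of expectation, and the existence (averaging) step — is routine. A greedy/induction alternative (repeatedly delete a minimum-degree vertex together with its neighbourhood) also works, but propagating the reciprocal-degree weights $\tfrac{1}{1+d_v}$ correctly through the recursion is fiddlier than the one-line expectation computation, so I would present the permutation argument.
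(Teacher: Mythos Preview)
Your proposal is correct and follows essentially the same approach as the paper: both pick a uniformly random permutation $\pi$ of $V$, define $I$ as the set of vertices preceding all their neighbours, and compute $\E{|I|} = \sum_{v} 1/(1+d_v)$ via linearity of expectation. Your write-up is in fact more careful than the paper's, since you explicitly justify why $I_\pi$ is independent, why the marginal probability equals $1/(1+d_v)$, and why the expectation bound yields the existence of a good permutation.
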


With this lower bound, policymakers can have a guarantee on the size of the largest independent set sample. These bounds provide them the flexibility of choosing a suitable sample size, given their resource constraints and study objectives. We note that finding the maximum independent set is NP-hard \parencite{garey1979computers} and is computationally expensive, especially for large social networks. Instead, one can find a maximal independent set much more efficiently. Apart from the vanilla independent set sample, where selected respondents are not connected to each other, policymakers can also choose to obtain ``stronger" independent sets, where the distance between pairwise sampled nodes is at least $k$, for $k \geq 2$. However, we note that there exists an inherent trade-off between the efficacy of the sampling technique and sample size. To obtain a reliable estimate of the policy effect, policymakers might require a large sample size which is resource-intensive to obtain. On the other hand, a modest sample size which can optimize resource allocation may compromise statistical power or the detectable effect size. Although ``stronger" independent sets can be useful in complex contagion studies (e.g., \cite{centola2007complex}), our current study uses the independent set sampling strategy as part of a policy evaluation method, and hence we focus our analysis on the vanilla independent set.\\

As social networks are finitely big and the maximum degree of any social network is smaller than the number of nodes, we are able to find an independent set efficiently \parencite{Halldórsson1997}. We note that there is a possibility of isolated nodes present in the network, but this does not affect the sampling procedure. Since isolated nodes are not connected to any other nodes, we can choose to either include or exclude them in the treated sample, depending on the design choices of the policy evaluation study. 

\subsection{Specifying a stochastic actor-oriented model}\label{sec:specifySAOM}

Following the construction of the independent set-based sample, we specify and fit a stochastic actor-oriented model (SAOM) \parencites{snijders1996stochastic, snijders2007, snijder_2010} to track how the network structure evolves in response to changes in behavior, and vice versa, after accounting for individual-level (e.g., age, gender) covariates. We specify the technical details of the SAOM model, specifically the rate and objective functions, in Appendix \ref{appendix:SAOMdetails}. As the SAOM specifies the co-evolution using a continuous-time Markov process, we are able to inspect the network-behavior states at every simulated \textit{micro-step} between the observed states. Specifically, this allows us to track the changes in focal behavior of the sampled respondents who are likely to be \emph{directly} influenced by the behavior of their peer respondents \parencite{steglich2010}, as well as the broader social network. By inspecting the specific micro-steps when the behavior change occurs, we can assess the onset of network interference in the treatment set. Furthermore, the SAOM is flexible and versatile as it allows policymakers to specify a wide variety of network and behavioral processes based on their domain expertise. The estimated model can then be used to simulate new networks under different counterfactual scenarios related to changes in the fitted network and behavioral processes. In the next section, we illustrate how the independent set sampling strategy as a selection procedure can be used together with the SAOM, as described here, to isolate the direct effect of a policy from the indirect (policy) network effects.

\subsection{Estimating policy effects using SAOM} \label{sec:estimateeffect}
As mentioned earlier, the sampled respondents exposed to the policy may influence each other via their underlying contact network, which can then potentially alter their focal behavior. Due to the design of the independent set, policy-exposed respondents are not connected to one another, and hence unable to directly influence each other's behavior. This provides the policymakers with an extended temporal window to estimate the direct effect of the policy. However, we are unable to obtain a complete and long-term isolation of the policy-exposed respondents in a social network due to the gradual onset of the network interference over time. Thus, the key consideration here is that the direct effect of a policy on the sampled respondents, who are selected through the independent set sampling technique, can be better isolated in the period immediately following the policy implementation. To empirically illustrate this point, we introduce the concept of \emph{time to first change} (TTFC) as the duration in the post-policy period when one of the sampled respondents first experiences a behavior change. Formally, suppose that there are $n$ sampled respondents who are exposed to the policy at time $t=0$. Each policy-exposed respondent has a binary behavior variable $X_i(t)$, for $i \in [n]$, and $\sum_i X_i(t)$ represents the total adoption of the focal behavior at time $t$. Assuming that there is no time lag in response to the policy i.e., the direct effect of the policy manifests instantaneously, the time to first change (TTFC) is a random variable, that is given as 
$$\text{TTFC} = \inf\{t \, | \, \sum_i X_i(t) \neq \sum_iX_i(0)\}$$
The policymakers can then obtain an empirical estimate of the TTFC by tracking the micro-steps through the SAOM, as described in the earlier subsection. Since the TTFC is sensitive to the network centralities and the ``position" of the sampled respondents in the network, we repeat the simulations from the SAOM multiple times to obtain an empirical average TTFC. It is important to highlight here that the TTFC captures \textit{any} change that is not directly attributable to the policy, i.e., it includes changes due to network interference as well as the individuals' natural tendencies towards changing their behavior over time. 
As the direct treatment effect is confounded with the respondents' natural tendency to change behavior, we estimate the direct effect by adopting a counterfactual approach. Specifically, we re-estimate the SAOM using the same networks and behavior, but \textit{without} the policy intervention or the policy-induced behavior changes. By tracking the same group of sampled respondents across both scenarios and taking the absolute difference in the proportion of sampled individuals who possess the focal behavior, we cancel out any effect of the respondents' natural tendency. Using suitable temporal cut-offs based on the TTFC, we are able to isolate and estimate the direct effect in the pre-TTFC period. Policymakers can choose appropriate cut-offs based on their domain knowledge and experience. In the following section, we illustrate this process through an empirical study using both real-world and generative network data. 

\section{Policy Intervention Analysis under Network Interference}\label{sec:analysis}
In this section, we empirically illustrate the effectiveness of our proposed method using a simulation-based study. We leverage a pre-existing and well-known social network dataset to mimic a real-world policy introduction and evaluation exercise. Since there are numerous ways to select an independent set sample, we perform a comparative analysis of several notable options. Furthermore, we illustrate the robustness of our method under various network structures, created using various graph generation models. 

\subsection{Illustrative study using a real-world social network}\label{sec:simulationstudy}
\textbf{Data.} For this simulation exercise, we use data from the \emph{Teenage Friends and Lifestyle Study} \parencites{michell2000smoke, pearson2003drifting}. The dataset tracks a cohort of students from 1995 to 1997, whose data on friendship networks, smoking behavior, and other lifestyle variables were measured at three successive time points. Although the social networks in the dataset were obtained by asking respondents to nominate up to six friends and rank the strength of these friendships, we use an unweighted and undirected version of this network for our study. We note that the dataset faces left and/or right censoring, with 129 out of the 160 respondents present throughout the cohort study. We include these 129 respondents for our simulation exercise. The original study provides data on the frequency of the respondents' substance use and leisure activities. For our study, we treat smoking as our focal and binary behavior where respondents are considered smokers if they smoke within the observation period. We also include other socio-demographic attributes such as age and gender, as well as income which is measured in the study by the amount of pocket money received. \\

\textbf{Sampling process and robustness checks.} We select respondents for policy exposure using our proposed independent set sampling technique, as well as using random and cluster sampling techniques in order to test the relative effectiveness of our proposed strategy. We first iteratively generate a maximal independent set (see Algorithm \ref{alg: indpset} in Appendix \ref{appendix:selection} for details). To ensure that the size of the set of policy-exposed respondents does not affect the estimation of the expected TTFC as described in the previous section, we try to maintain a consistent sample size across the three sampling techniques. We obtain a set of randomly selected respondents that is of equal size to the independent set sample. To obtain our cluster sample, we use the popular fast greedy modularity algorithm \parencite{clauset2004finding}, which has been used in recent studies \parencites{nadini2021mapping, hernandez2021environmental}. We note that it is difficult to ensure that the cluster sample is the same size as the independent set and random samples. Hence, we choose clusters whose total size is close to the size of the independent set sample, with a difference of at most 20 respondents. We note that it is possible for policymakers to selectively expose only a subset of these samples to the policy, due to resource constraints or other factors. However, for the current exercise, we expose the entire sample to the policy. We provide a graph visualization of the possible samples obtained through the various strategies and corresponding algorithms in Figure \ref{fig: graphvisualization} in Appendix \ref{appendix: samplevisualization}.\\

Another consideration in our proposed approach is the question of structural preservation across multiple runs of the independent set sampling technique. To test the robustness of picking any arbitrary maximal independent set for our simulation study, we provide a descriptive summary of common measures such as mean degree, mean transitivity, and pairwise Jensen-Shannon Divergence for network structures obtained from 500 runs of the sampling technique. The results, presented in Appendix \ref{appendix:maxindpset}, show relatively small standard deviations of mean degree and mean transitivity across the runs. Additionally, the low divergence score suggests that the degree distributions across the runs are also very similar. Taken together, this illustrates that the iterative selection of independent sets is robust in preserving important structural properties of the network.\\

\textbf{SAOM model and policy implementation.} Next, we describe our proposed strategy for estimating the average TTFC, as described in the previous section. The \emph{Teenage Friends and Lifestyle Study} dataset can be used to infer the dynamics of the social network and focal behavior \textit{prior} to the policy implementation. To do this, we run a stochastic actor-oriented model (SAOM) to estimate the network and behavioral rates and statistics based on the first three waves (see \cite{snijder_2010, Snijders2017, kalish2020stochastic} and Appendix \ref{appendix:SAOMdetails} for further details on the model specification). We include a set of commonly used statistics, as shown in Table \ref{table:networkestimatesSAOM}, to model the existing dynamics in the dataset, but also emphasize that this choice is purely illustrative. In an actual policy evaluation, researchers and policymakers can collectively determine the appropriate set of statistics to specify for the contact network based on their domain expertise. Apart from our focal statistics that measure peer influence and homophily, we also include income homophily as a control, degree-related popularity effect to account for any preferential attachment in network formation, and the effect of income on smoking. The model estimates are presented in Table \ref{table:networkestimatesSAOM}, and the overall maximum convergence
ratio is 0.139.
\begin{table}[H]
    \centering
    \begin{tabular}{|l|l|}
    \hline
    \textbf{Parameters} & \textbf{Coefficients} \\
    \hline
    Friendship rate (Period 1) & 5.1253  (0.4829) \\
    \hline
    Friendship rate (Period 2) & 4.1124 (0.3822)\\
    \hline
    Degree (Density) & -0.2466 (0.2360)\\
    \hline
    Transitive Triads & 1.1438  (0.0583)\\
    \hline
    Degree of alter & -0.3907 (0.0473)\\
    \hline
    Ego behavior similarity & -0.1677 (0.2079)\\
    \hline
    Behavior similarity & 0.6617 (0.1962)\\
    \hline
    Ego income similarity & 0.0255 (0.0100)\\
    \hline
    Income similarity & 1.2857 (0.4023)\\
    \hline
    Behavior rate (Period 1) & 0.8901 (0.2873)\\
    \hline
    Behavior rate (Period 2) & 0.7167  (0.2278)\\
    \hline
    Behaviour Tendency (Linear Shape) & 0.2326 (1.5475)\\
    \hline
    Average Peer Influence & 5.0589 (1.8821)\\
    \hline
    Behavior Outdegree & 0.2926 (0.3007)\\
    \hline
    Behavior: Effect from Income & 0.0584 (0.0490)\\
    \hline
    \end{tabular}
    \caption{Network and behavioral estimates}
    \label{table:networkestimatesSAOM}
\end{table}
We incorporate these statistics in the fitted model to simulate network and behavioral dynamics post-policy implementation, as we describe next. Specifically, policymakers can select different samples based on the above-mentioned approaches and expose them to the intended policy e.g., an anti-smoking intervention. This aims to alter the focal behavior of the respondents via the direct policy effect and the net effect which includes the peer influence from the social network. For our simulation, we specify that all policy-exposed respondents lose their focal behavior (i.e., smoking status) with a probability of 0.5. We also introduce a small perturbation in the network, specifically a 0.1\% probability of forming or deleting ties, to proxy for any unobserved change in the network that may have occurred during the data collection, sampling and/or policy implementation stages. We note that the choice of perturbation probability is purely illustrative and policymakers can adjust this noise factor based on their domain knowledge or any available contextual data. After the policy implementation, we simulate future networks and behavioral dynamics using the SAOM model specified above\footnote{Our codes are adapted from \url{https://www.stats.ox.ac.uk/~snijders/siena/siena_scripts.htm}, specifically WorkOnChains.R and ShowBehavioralChains.R}. As the income variable in the pre-policy implementation is exogenous and dynamic, we assume for simplicity that income does not change post-policy implementation. \\

\textbf{Deriving TTFC.} To quantify the effectiveness of our proposed strategies, we track the total adoption of the focal behavior within the group of policy-exposed respondents over time and obtain an empirical average TTFC based on 1000 runs of the simulation. We note that the average sample size across the three sampling strategies is 41 over the 1000 runs. As mentioned earlier, the TTFC can be affected by other factors such as the individuals' natural tendency to change their focal behavior over time. However, since we simulate future co-evolutions using the same statistics and corresponding estimates stated in Table \ref{table:networkestimatesSAOM} across the three samples (i.e., independent vs. random vs. cluster), these individual tendencies should be identical across these different samples of similar size. Furthermore, and as we show later in this section, our use of a counterfactual network strategy to estimate the direct policy effect cancels any effect from the individuals' natural tendencies. Thus, our empirical design guarantees that we are able to attribute any differences in TTFC to the differences in network interference generated by the respective sampling techniques. We illustrate the differences in average TTFC among the sampling conditions, their respective standard errors, and the pairwise Kolmogorov-Smirnov (K-S) test statistics in Tables \ref{tab:realdatastoppingtime} and \ref{tab:realdatastoppingtimekstest} below.

\begin{table}[H]
    \centering
    \begin{tabular}{|c|c|c|}
    \hline
    Sampling technique & Empirical average \emph{time to first change} (epoch) & Standard error \\
    \hline
    Independent & 439.477 & 6.781 \\ 
    \hline
    Random & 420.752 & 7.000 \\ 
    \hline
    Cluster & 416.451 & 7.016 \\ 
    \hline
    \end{tabular}
    \caption{Time to first change (TTFC) across three sampling strategies}
    \label{tab:realdatastoppingtime}
\end{table}

\begin{table}[H]
    \centering
    \begin{tabular}{|c|c|c|c|}
    \hline
    & Independent & Random & Cluster \\
    \hline
    Independent & NA & 0.053$^\Delta$ & 0.062$^*$ \\ 
    \hline
    Random & 0.053$^\Delta$ & NA & 0.032 \\ 
    \hline
    Cluster & 0.062$^*$ & 0.032 & NA \\ 
    \hline
    \end{tabular}
    \caption{Pairwise Kolmogorov-D statistic from 2-sample K-S test, $p<0.10 \, (^\Delta)$, $p<0.05 \, (^*)$}
    \label{tab:realdatastoppingtimekstest}
\end{table}

From Tables \ref{tab:realdatastoppingtime} and \ref{tab:realdatastoppingtimekstest}, we observe that the average TTFC for the independent set sample is larger than the other two samples. Moreover, the 2-sample K-S test highlights that the empirical distribution of the TTFC based on the independent set sampling technique is significantly different from that obtained using the random and cluster sampling techniques. This is due to the absence of any edges within the independent set sample which eliminates any direct spillover among the policy respondents for an extended period of time, thus limiting any indirect policy network effect among these respondents. This longer duration of TTFC for the independent set sample, as compared to random or cluster sample, affords policymakers greater flexibility to estimate the direct policy effects. However, we acknowledge that beyond a short term, new edges might develop within the independent set sample due to a variety of reasons e.g., as a result of homophily based on focal behavior or other latent factors. \\

\textbf{Incorporating various centrality measures.} As discussed earlier, one could sample different independent sets for a given social network. Hence, in addition to studying a simple iterative maximal independent set, it is also important to analyze the effects of various independent set selection techniques on the TTFC, as described above. For instance, we can seek to maximize the total or average centrality of the independent set sample based on popular centrality measures. For example, we can maximize the degree sum of all the vertices in the independent set, whose size can be chosen with reference to the bounds of $\alpha(G)$ as discussed in Subsection \ref{sec:indpsetsampling}. Moreover, policymakers may also have certain size or budget constraints. To meet these contextual requirements, we can select an independent set by solving the following integer program, with $A_G$ as the adjacency matrix of the graph $G(V,E)$, $x$ as the vertex incidence vector, $d$ as the degree vector and $c$ as the cost vector of selecting a particular respondent into the sample.
\begin{align}\label{program1}
    \max \; & d^Tx \\ 
    \text{s.t. }& \mathbf{1}^Tx \leq m \nonumber \\ 
    & c^Tx \leq b \nonumber \\
    & x^TA_Gx = 0 \nonumber \\
    & x \in \{0,1\}^{|V(G)|} \nonumber
\end{align}
where $m$ is the size of the independent set, $b$ is the budget, $|V(G)|$ is the number of vertices in the network $G$ and $\mathbf{1}$ is an all-one vector. Suppose policymakers require an independent set of a size $m$ smaller than the lower bound of $\alpha(G)$. In that case, there will be a feasible solution to this integer programming problem and they can find such an independent set $I$ with the largest degree sum, where $I$ contains vertex $i$ if $x_i = 1$. We can change the maximization objective function in the above integer program to incorporate other centrality measures such as closeness, betweenness, page rank, etc., depending on the structural properties that need to be preserved in the policy evaluation exercise. \\

As solving the abovementioned integer program can be computationally intensive, we provide a simpler alternative method to construct the sample under scenarios where the cost of selecting any respondent is the same, and there are no budgetary considerations. As the network interference depends on how central the sampled respondents are, we consider independent sets with high degree and closeness centrality. Instead of solving the integer programs, we employ greedy algorithms (see Algorithms \ref{alg: indpsetdeg} and \ref{alg: indpsetcloseness} in Appendix \ref{appendix:selection} for implementational details) to select independent sets with high degree and closeness centrality, and repeat our simulation exercise. 
The average sample size across the 1000 runs is 35 and 39 for the independent set samples with high degree and closeness centralities, respectively. We illustrate the networks obtained through the various sampling strategies in Figure \ref{fig: graphvisualization} in Appendix \ref{appendix: samplevisualization}. We highlight the differences in average TTFC scores among these independent sets, their respective standard errors, and the pairwise Kolmogorov-Smirnov (K-S) test statistics in Tables \ref{tab:realdatastoppingtimevar} and \ref{tab:realdatastoppingtimekstestvar} below.

\begin{table}[H]
    \centering
    \begin{tabular}{|c|c|c|}
    \hline
    Sampling technique & Empirical average \textit{time to first change} (epoch) & Standard error \\
    \hline
    Independent & 439.477 & 6.781 \\ 
    \hline
    \makecell[c]{Independent \\ (High Degree)} & 456.912 & 6.734 \\ 
    \hline
    \makecell[c]{Independent \\ (High Closeness)} & 454.179 & 6.654 \\ 
    \hline
    \end{tabular}
    \caption{Time to first change (TTFC) across three variants of independent set}
    \label{tab:realdatastoppingtimevar}
\end{table}

\begin{table}[H]
    \centering
    \begin{tabular}{|c|c|c|c|}
    \hline
    & Independent & \makecell[c]{Independent \\ (High Degree)} & \makecell[c]{Independent \\ (High Closeness)} \\
    \hline
    Independent & NA & 0.081$^*$ & 0.078$^\Delta$ \\ 
    \hline
    \makecell[c]{Independent \\ (High Degree)} & 0.081$^*$ & NA & 0.028 \\ 
    \hline
    \makecell[c]{Independent \\ (High Closeness)} & 0.078$^\Delta$ & 0.028 & NA \\ 
    \hline
    \end{tabular}
    \caption{Pairwise Kolmogorov-D statistic from 2-sample K-S test, $p<0.10 \, (^\Delta)$, $p< 0.05 \, (^*)$}
    \label{tab:realdatastoppingtimekstestvar}
\end{table}

From Tables \ref{tab:realdatastoppingtimevar} and \ref{tab:realdatastoppingtimekstestvar}, we observe that the average TTFC of the high-degree and high-closeness independent set samples are slightly higher than the vanilla maximal independent set sample, and that both empirical distributions are highly similar. In Figure \ref{fig: adoption}, we demonstrate the change in the normalized adoption rate of the focal behavior over time, across the 5 sampling strategies, based on 200 runs of our model.

\begin{figure}[H]
    \centering
    \includegraphics[width=0.75\linewidth]{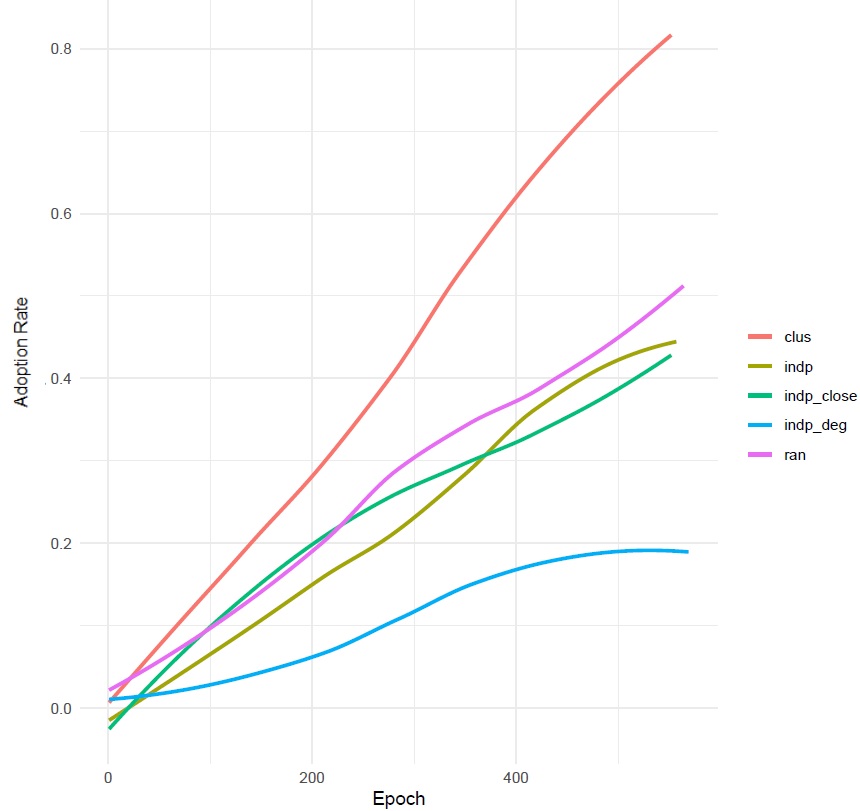}
    \caption{Smoothed adoption rate for different sampling strategies}
    \label{fig: adoption}
\end{figure}

This result complements the observed differences in empirical TTFC across the different sampling strategies. Specifically, we see that the rate of adoption of the focal behavior within the cluster and random samples are higher than the independent set sample. Hence, it is likely that these samples experience network interference faster than the independent set counterparts.\\


\textbf{Estimating direct policy effect}. 
Based on the estimated TTFC, we can attempt to decouple the \textit{direct} effect from the \textit{net} treatment effect of the policy, which includes both direct and network effects. As the above-mentioned sampling strategies lead to different network structures within the sampled treatment groups, we hypothesize that the three sampling conditions will lead to varying estimates of the net effect of the policy. We use the SAOM estimates listed in Table \ref{table:networkestimatesSAOM} to simulate future network-behavior states over 1000 runs, and use the counterfactual approach described in Subsection \ref{sec:estimateeffect} to derive the empirical mean treatment effect of the policy. Since the estimated empirical average TTFC, as shown in Table \ref{tab:realdatastoppingtime}, provides an indication of when network interference might take effect, this can be used as a temporal threshold around which the direct and net treatment effect can be inferred. We elucidate the efficacy of the independent set sampling strategy in the presence of network interference, by tracking the changes in the treatment effect over time under different sampling conditions. As an illustration, we compute the change in treatment effect across two temporal windows based on the estimated average TTFC in Tables \ref{tab:realdatastoppingtime} and \ref{tab:realdatastoppingtimevar}, which demonstrate the gradual onset of network interference after the policy intervention: \textit{first interval} (pre-TTFC vs. short-term around the TTFC; e.g. epochs 200 vs. 425), and \textit{second interval} (short-term around the TTFC vs. long-term post-TTFC; e.g., epochs 425 vs. 490). We repeat the simulations over 1000 runs and present the results of the change in mean treatment effects in Table \ref{tab:gradientfromtreatment} below.

\begin{table}[H]
    \centering
    \begin{tabular}{|c|c|c|}
    \hline
    \multirow{2}{*}{Sampling technique} & \multicolumn{2}{c|}{Changes in treatment effect}\\
    \cline{2-3}
    & First interval & Second interval \\ 
    \hline
    Independent & 0.076 & 0.037\\ 
    \hline
    Random & 0.091 & 0.036\\ 
    \hline
    Cluster & 0.086 & 0.046\\ 
    \hline
    \makecell[c]{Independent \\ (High Degree)} & 0.057 & 0.025 \\ 
    \hline
    \makecell[c]{Independent \\ (High Closeness)} & 0.059 & 0.024 \\ 
    \hline
    \end{tabular}
    \caption{Changes in treatment effect across intervals}
    \label{tab:gradientfromtreatment}
\end{table}

Through the construction of the independent set, we minimize the spillover among the treated respondents and are thus able to decouple the direct and indirect network effects, before the empirically obtained TTFC thresholds, as stated in Table \ref{tab:realdatastoppingtime}. 
From Table \ref{tab:gradientfromtreatment}, we note that the gradient of the mean treatment effect for the independent set strategy is the smallest at 0.076, as compared to 0.091 and 0.086 for the random and cluster sampling strategies respectively, in the first interval. The larger difference in the gradients between the random/cluster and independent set sampling strategies suggests a stronger presence of the network interference effect in the random and cluster samples. The inference is also complemented by the smaller gradient of the independent sets with high degree and closeness, as shown in Table \ref{tab:gradientfromtreatment}. Expectedly, we observe that the gradient across all sampling strategies in the second interval is similar, as a result of the increased network interference over time across all types of samples.

\subsection{Illustrative study using random graph-based models}\label{sec:simulationgraph}
We further extend this stimulation study by analyzing the sensitivity of our proposed method to different graph structures such as small-world and scale-free. We employ three commonly used random graph-based models to generate the underlying population networks, specifically the Erd\"{o}s-Renyi \parencite{erdHos1960evolution}, Watts-Strogatz \parencite{watts1998collective} and scale-free network models \parencite{goh2001universal}. We follow a similar simulation procedure as mentioned in the earlier sections but replace the networks adopted from the \emph{Teenage Friends and Lifestyle Study} \parencites{michell2000smoke, pearson2003drifting} with generated networks from these stated models. We first generate an initial network and include a small perturbation to induce network dynamics, specifically a 0.1\% probability of forming or deleting ties in the network, over the first 3 observations. The choice of 0.1\% probability follows from the noise term in the earlier simulation, and the mean number of edge formation and deletion is similar to what is observed in the \emph{Teenage Friends and Lifestyle Study}.\\

After setting up the simulation specifications, we specify the parameters for the three generative graph models. We note that the graphical structure in the Erd\"{o}s-Renyi (E-R) graph changes once the probability $p(n)$, which is dependent on the number of nodes $n$, exceeds a certain threshold \parencite{jackson2008social}. For example, a giant component starts to emerge at the threshold of $p(n) = \frac{1}{n}$ in the limit. Hence, we run the numerical experiments over different thresholds, specifically, $\frac{1}{n}, \frac{1}{n^{0.5}}$ and $\frac{1}{n^{0.25}}$. For the Watts-Strogatz (W-S) model, we consider re-wiring probabilities of 0.05, 0.10 and 0.15. Whereas, for the scale-free network model (S-F), we consider scale factors of 2, 2.5 and 3 as many scale-free networks have scaling exponents between 2 and 3 \parencites{goh2002classification, chen2004modeling, goh2001universal, barabási2016network}. We set the size of the network for both W-S and S-F models to be 774, which is comparable to the expected size of the E-R graph with edge probability $\frac{1}{n^{0.5}}$. We repeat the simulation over 500 runs to obtain the empirical average TTFC and the respective standard errors, as shown below in Tables \ref{tab:stoppingtimeer}, \ref{tab:stoppingtimews} and \ref{tab:stoppingtimeba}. 

\begin{table}[H]
    \centering
    \begin{tabular}{|c|c|c|c|}
    \hline
    \diagbox[width=15.25em]{Sampling technique}{Probability} & $\frac{1}{n}$ & $\frac{1}{n^{0.5}}$ & $\frac{1}{n^{0.25}}$ \\
    \hline
    Independent & 58.244 (1.806) & 87.580 (1.963) & 116.728 (1.720)\\ 
    \hline
    Random & 62.988 (1.839) & 83.330 (1.968) & 115.344 (1.824) \\ 
    \hline
    Cluster & 64.808 (1.923) & 83.550 (1.999) & 111.484 (1.962) \\ 
    \hline
    \makecell[c]{Independent \\ (High Degree)} & 59.228 (1.835) & 92.606 (1.917) & 121.212 (1.597) \\ 
    \hline
    \makecell[c]{Independent \\ (High Closeness)} & 61.076 (1.860) & 96.138 (1.870) & 121.502 (1.608) \\ 
    \hline
    \end{tabular}
    \caption{Time to first change (TTFC) across three sampling techniques and two variants of independent set sample for E-R graphs}
    \label{tab:stoppingtimeer}
\end{table}

\begin{table}[H]
    \centering
    \begin{tabular}{|c|c|c|c|}
    \hline
    \diagbox[width=15.25em]{Sampling technique}{Re-wiring probability} & 0.05 & 0.10 & 0.15 \\
    \hline
    Independent & 92.132 (1.806) & 88.614 (1.904) & 100.354 (2.255)\\ 
    \hline
    Random & 88.456 (1.794) & 86.714 (1.917) & 97.268 (2.029) \\ 
    \hline
    Cluster & 89.604 (1.907) & 68.560 (1.927) & 89.628 (2.253) \\ 
    \hline
    \makecell[c]{Independent \\ (High Degree)} & 100.286 (1.703) & 95.064 (1.809) & 111.97 (1.743) \\ 
    \hline
    \makecell[c]{Independent \\ (High Closeness)} & 93.214 (1.920) & 93.852 (1.821) & 107.474 (1.883) \\ 
    \hline
    \end{tabular}
    \caption{Time to first change (TTFC) across three sampling techniques and two variants of independent set sample for W-S graphs}
    \label{tab:stoppingtimews}
\end{table}

\begin{table}[H]
    \centering
    \begin{tabular}{|c|c|c|c|}
    \hline
    \diagbox[width=15.25em]{Sampling technique}{Scale Factor} & 2 & 2.5 & 3 \\
    \hline
    Independent & 85.410 (2.132) & 88.528 (2.010) & 90.512 (2.161)\\ 
    \hline
    Random & 79.912 (2.241) & 86.394 (2.058) & 87.264 (2.231) \\ 
    \hline
    Cluster & 83.164 (2.228) & 80.848 (2.044) & 86.968 (2.238) \\ 
    \hline
    \makecell[c]{Independent \\ (High Degree)} & 102.894 (2.070) & 97.802 (1.939) & 91.950 (2.242) \\ 
    \hline
    \makecell[c]{Independent \\ (High Closeness)} & 103.958 (2.084) & 92.482 (2.013) & 89.624 (2.267) \\ 
    \hline
    \end{tabular}
    \caption{Time to first change (TTFC) across three sampling techniques and two variants of independent set sample for S-F graphs}
    \label{tab:stoppingtimeba}
\end{table}

From Tables \ref{tab:stoppingtimeer}, \ref{tab:stoppingtimews} and \ref{tab:stoppingtimeba}, we observe that the estimates of the TTFC are generally consistent with the trends shown in Tables \ref{tab:realdatastoppingtime} and \ref{tab:realdatastoppingtimevar} in the earlier subsection, highlighting that the independent sets are able to delay network interference from taking effect. This suggests that the benefits of an independent set sampling are robust to both changes in population network structure as well as the composition of the independent set. However, there is an exception for the Erd\"{o}s-Renyi model with edge probability $\frac{1}{n}$, where we observe a highly similar empirical average TTFC across all sampling strategies.
We contend that this might be due to the sparse network of about 220 edges. This would lead to minimal edges within the random and cluster samples. The TTFC will naturally be comparable across sampling strategies for such sparse networks. This also suggests that the network size, and particularly the extent of sparsity, might be a factor in determining the effectiveness of the sampling strategies, and future extensions of this work can systematically study this.

\section{Discussion and Conclusion}\label{sec:discussion}
\subsection{Summary of key findings}
Evaluating the impact of social and economic policies is critical to organizations and governments worldwide. The problem of policy evaluation in a networked context has remained under-studied, owing to various methodological limitations with existing work. The question of disentangling the direct impact of a policy is not merely an analytical exercise - it is a linchpin for resource allocation and provides nuanced insights into the interplay between policy effects and network dynamics. In the current study, we propose an empirical strategy that combines a novel sampling strategy based on independent set selection with an SAOM to disentangle the direct policy effect from the net treatment effect, which is a combination of the direct and network-related effects of the policy. We contend that our proposed method offers a context- and policy-agnostic empirical strategy to estimate and disentangle policy effects, and is a good alternative to the commonly used cluster sampling-based techniques, especially in contexts where the selection of a diverse sample of respondents might be preferable. Hence, the overall empirical strategy presented in this paper, as well as the underlying methods and specifications are likely to be useful for decision-makers in a wide variety of organizational and public policy contexts. \\

The results from our simulation exercise demonstrate how different sampling techniques (i.e., independent set vs. random vs. cluster) influence the policy impact on policy-exposed respondents through changes in both the network structure and their behavior. Through the empirical estimation of temporal thresholds (i.e., TTFC), we are able to decouple the direct policy effects from the indirect network effects in the short term. Specifically, our empirical results confirm that it takes a longer average time for the network interference to influence the policy-exposed respondents in the independent set sample, as compared to the random or cluster sets, and that this improvement is robust to both the population network structure as well as the structural composition of the independent set. This longer TTFC can allow policymakers to flexibly measure the direct effects of a policy implementation. To estimate the policy effects, we adopt a counterfactual approach which compares the evolution of the network and behavior with vs. without a policy implementation. The treatment effect can subsequently be inferred at specific points in time (e.g., short-term around the TTFC vs. long-term post-TTFC) as the difference in the proportion of individuals having the focal behavior in the two networks. \\

As mentioned above, we also highlight the robustness of our proposed strategy to variations in the composition of the independent set as well as the contact network structure. 
We also investigate different contact networks generated from random graph-based models using a range of parameters. Our results confirm that the independent set sampling technique leads to consistently longer TTFC across network structures. Furthermore, we note that the network size and sparsity might also play a significant role in determining the effectiveness of the sampling strategies. \\

Understanding the role of sampling strategies in policy design has clear practical implications for policymakers with regard to both implementation cost, as well as their varying effects on the selection of respondents. Based on the design framework and comparative analyses presented in this paper, policymakers can better infer the direct and net effects of a policy. Our specific sample selection technique allows for an optimized policy implementation based on available resources, and helps preserve the diversity of respondents.

\subsection{Limitations and future extensions}
Since this study is among the first to analyze the utility of network sampling strategies in a policy evaluation context, it is important to acknowledge certain limitations, modeling assumptions, and challenges. Firstly, our proposed empirical strategy requires access to the complete contact network to construct an independent set, which might be infeasible in many policy contexts due to resource limitations, legal constraints, and/or regulatory issues. We do note that this is a common limitation across several sampling strategies, although certain techniques such as random sampling may be more immune to this. Furthermore, the construction of the independent set is sensitive to the underlying contact network as the same set of respondents may not form an independent set in a different network. It is therefore imperative to study the robustness and sensitivity of our proposed strategy to data incompleteness and network changes. One possible way to test this is to relax the design of an independent set by allowing a small number of edges to exist within the set. However, we may then require careful adjustments to account for the spread of the policy effects within the treatment group. In addition, our empirical strategy focuses on the structural properties of the graph to isolate the direct policy effects from the indirect network effects. However, this may come at the cost of sample representativeness. For example, strategies such as random or independent set sampling are unable to guarantee a certain proportion of respondents having the focal behavior. Future work can consider exploring stratified versions of our proposed sampling technique to address this challenge. \\

Secondly, in the absence of any prior experimental datasets based on independent set sampling, we resort to a stylized simulation study in this paper to test the relative effectiveness of the sampling strategies. Future empirical studies performed in a real-world experimental or quasi-experimental context can seek to replicate and verify the effectiveness of our proposed sampling technique. We note that there can be right-censoring issues in certain real-world longitudinal studies where respondents may leave the study before they change their focal behaviors. The presence of such data censoring will likely affect the estimation of the TTFC. Thirdly, we make certain simplifying assumptions in this study, such as exposing all the sampled respondents to the policy simultaneously, and assuming that the change in focal behavior occurs instantaneously after the policy implementation and not after a certain lag. Although there are policies or interventions with minimal lags such as regulatory compliance or short-term economic incentives (e.g., limited-time product promotions), we acknowledge that policymakers may choose to implement their policy in batches and there may be a considerable time lag between the policy implementation and the change in focal behavior, as common in public health campaigns and training programs. These lags can confound with policy effects especially when both direct and network effects happen over similar time scales. For such policy contexts, it will be increasingly challenging to isolate the direct policy effect from the associated network effects. Hence, future work can seek to extend our current model by investigating how weakening some of these assumptions can affect the effectiveness of our empirical strategy. Fourthly, though our paper proposes an alternative sampling strategy, we acknowledge that implementing a policy on any group of people, regardless of the seeding strategy, requires careful execution. In our analysis, we assume that the respondents are largely policy-abiding and do not exhibit non-compliant or anomalous behavior. However, future work can investigate the sensitivity of the proposed model to varying levels of non-compliance which is routinely observed in policy contexts. Lastly, all our models rely on a standard Markovian assumption of the data. Although this is a common assumption in SAOM-based studies, it may be a strict assumption for certain study contexts where there are external shocks affecting either the network or the behavior. As a future work, policymakers can incorporate Dynamic Network Actor Models (DyNAMs) to capture and analyze sequence of behavioral changes among the sampled respondents. By analyzing how individual behavior and relationships change over time in response to policy interventions, DyNaMs can provide deeper insights into the mechanisms driving these behavioral shifts.

\printbibliography
\newpage

\begin{appendices}
\section{Selection of Independent Set}\label{appendix:selection}
In this section, we present implementational details of the method that we employ to obtain our independent set samples. We adopt an iterative approach, as used in other studies \parencite{blelloch2012greedy}, to obtain the independent set for our simulation. We provide the following definitions to complement our algorithm. 

\begin{definition}
A graph $G'(V',E')$ is a \emph{subgraph} of $G(V,E)$ where $V' \subseteq V$ and $E' \subseteq E$. For a subset of nodes $V' \subseteq V$, the subgraph of $G$ induced by $V'$ is the subgraph $G'(V',E')$, where $E' = E \cap \binom{V'}{2}$. A \emph{vertex-induced subgraph} contains a subset of vertices, coupled with edges whose endpoints are both in the subset.
\end{definition}

Also, we add a size limit parameter into the pseudocode to provide the flexibility of controlling for the size of the independent set. In our simulations, we do not constrain to a specific value, so we set $S = |V(G)|$.

\begin{algorithm}[H]
\caption{Iterative Independent Set Selection}\label{alg: indpset}
\textbf{Input: } Graph $G$, Size $S$ \\
\textbf{Output: } Independent Set $I$
\begin{algorithmic}[1]
\While{$G \neq \emptyset$}
    \State Randomly select a vertex $v \in V(G)$
    \State $I \leftarrow I \cup v$
    \State $G \leftarrow \Tilde{G}$, where $\Tilde{G}$ is the vertex-induced subgraph of $V(G) \backslash (v \cup N(v))$
    \Comment{$N(v)$ is the set of vertices which are adjacent to $v$}
    \If{$|I| > S$}
        \State break
    \EndIf
\EndWhile
\State \Return $I$
\end{algorithmic}
\end{algorithm}

\begin{algorithm}[H]
\caption{Greedy High Degree Independent Set Selection}\label{alg: indpsetdeg}
\textbf{Input: } Graph $G$, Size $S$ \\
\textbf{Output: } Independent Set $I$
\begin{algorithmic}[1]
\State \textbf{Initialize:} Sorted vertex array $D$, with all vertices in decreasing order of their degree centralities
\While{$D \neq \emptyset$}
    \State $I \leftarrow I \cup v$, where $v$ is the first element in $D$
    \State $D \leftarrow D \backslash (v \cup N(v))$
    \If{$|I| > S$}
        \State break
    \EndIf
\EndWhile
\State \Return $I$
\end{algorithmic}
\end{algorithm}

\begin{algorithm}[H]
\caption{Greedy High Closeness Independent Set Selection}\label{alg: indpsetcloseness}
\textbf{Input: } Graph $G$, Size $S$ \\
\textbf{Output: } Independent Set $I$
\begin{algorithmic}[1]
\State \textbf{Initialize:} Sorted vertex array $C$, with all vertices in decreasing order of their closeness centralities
\While{$C \neq \emptyset$}
    \State $I \leftarrow I \cup v$, where $v$ is the first element in $C$
    \State $C \leftarrow C \backslash (v \cup N(v))$
    \If{$|I| > S$}
        \State break
    \EndIf
\EndWhile
\State \Return $I$
\end{algorithmic}
\end{algorithm}

\newpage

\section{Structural Preservation of Maximal Independent Sets}\label{appendix:maxindpset}

To test the robustness of picking any arbitrary maximal independent set for our simulation study, we provide several statistics, such as mean degree, mean transitivity, and the pairwise Jenson-Shannon Divergence score based on the degree distribution of the maximal independent sets generated over 500 iterations.

\begin{figure}[H]
    \centering
    \includegraphics[scale=0.75]{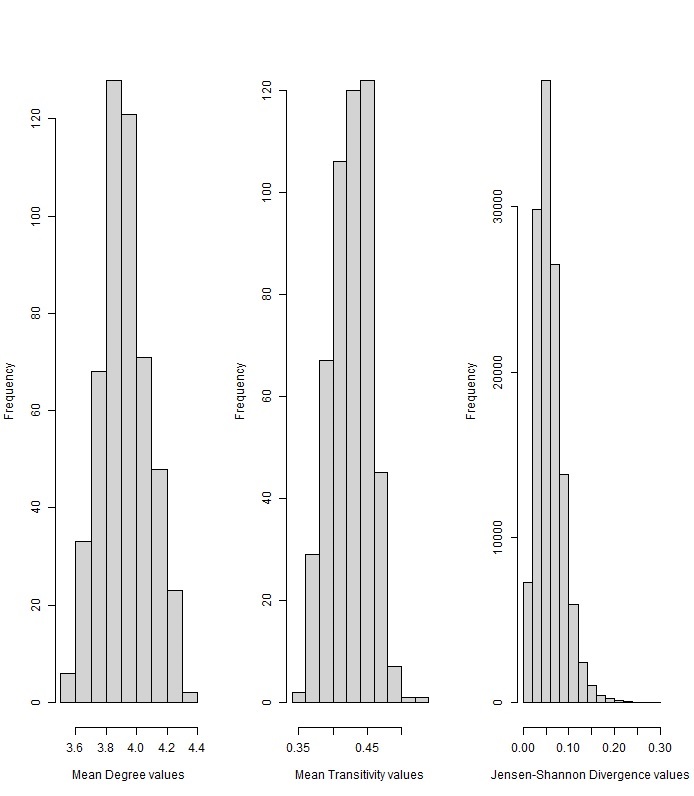}
    \caption{Common network descriptives of the maximal independent sets from dataset}
    \label{fig:glasgowindpsetstats}
\end{figure}

From Figure \ref{fig:glasgowindpsetstats}, we observe that the mean degree of the independent set samples is 3.925, with a standard deviation of 0.156, and a mean transitivity of 0.426, with a standard deviation of 0.028. To check the similarity of the degree distribution between these samples, we compute the Jenson-Shannon Divergence scores. By computing this score for each pairwise degree distribution, we obtain a mean divergence score of 0.058, with a standard deviation of 0.029. \\

\newpage

\section{Graph Visualization of the Samples}\label{appendix: samplevisualization}
We provide a visualization of the group of respondents selected through the respective sampling strategies. The sampled respondents are featured in yellow, whereas the non-sampled others are in grey.

\begin{figure}[H]
     \centering
     \begin{subfigure}[b]{0.3\textwidth}
         \centering
         \includegraphics[width=\textwidth]{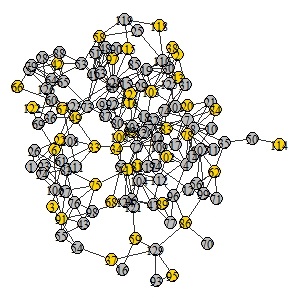}
         \caption{Independent set sample}
         \label{fig: indpsample}
     \end{subfigure}
     \begin{subfigure}[b]{0.3\textwidth}
         \centering
         \includegraphics[width=\textwidth]{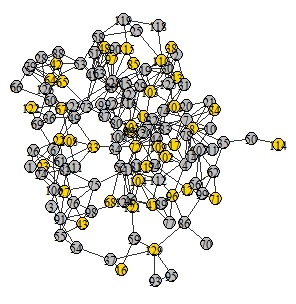}
         \caption{Random sample}
         \label{fig: randomsample}
     \end{subfigure}
     \begin{subfigure}[b]{0.3\textwidth}
         \centering
         \includegraphics[width=\textwidth]{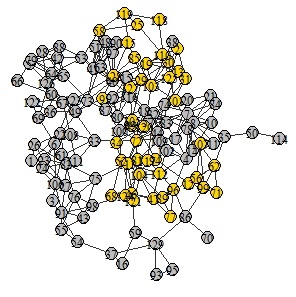}
         \caption{Cluster sample}
         \label{fig: clustersample}
     \end{subfigure}

    \begin{subfigure}[b]{0.3\textwidth}
         \centering
         \includegraphics[width=\textwidth]{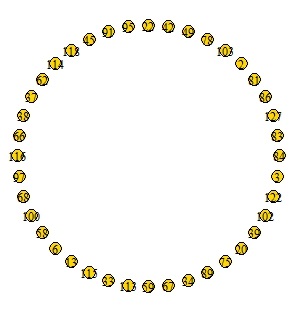}
         \caption{Independent set sample subgraph}
         \label{fig: indpsubgraph}
     \end{subfigure}
    \begin{subfigure}[b]{0.3\textwidth}
         \centering
         \includegraphics[width=\textwidth]{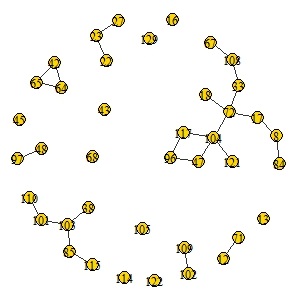}
         \caption{Random sample subgraph}
         \label{fig: randomsubgraph}
     \end{subfigure}
     \begin{subfigure}[b]{0.3\textwidth}
         \centering
         \includegraphics[width=\textwidth]{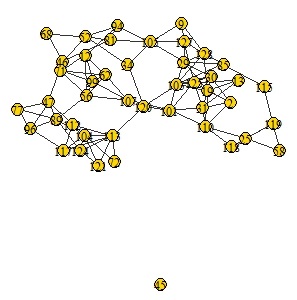}
         \caption{Cluster sample subgraph}
         \label{fig: clustersubgraph}
     \end{subfigure}

    \begin{subfigure}[b]{0.35\textwidth}
         \centering
         \includegraphics[width=\textwidth]{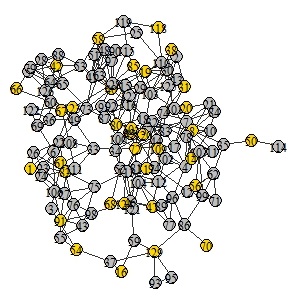}
         \caption{Independent set sample with high degree centrality}
         \label{fig: indpdeg}
     \end{subfigure}
     \quad
     \begin{subfigure}[b]{0.35\textwidth}
         \centering
         \includegraphics[width=\textwidth]{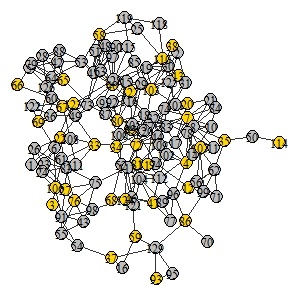}
         \caption{Independent set sample with high closeness centrality}
         \label{fig: indpclose}
     \end{subfigure}
     \caption{Graphs generated using different sampling strategies}
        \label{fig: graphvisualization}
\end{figure}

\newpage

\section{SAOM Formulation}\label{appendix:SAOMdetails}
In this section, we introduce and develop the modeling specifications for the SAOM that we incorporate into our methodology.\\

Adopting from the \emph{Teenage Friends and Lifestyle Study} \parencites{michell2000smoke, pearson2003drifting}, we generate a network of $N=129$ respondents, and model two dependent variables, namely the dynamic friendship network, which is represented by a $N \times N$ symmetric adjacency matrix $A_t$, and a $N \times 2$ behavioral matrix, $B_t$, whose columns correspond to the focal behavior and income, as denoted by $B_{1,t}$ and $B_{2,t}$. We set $B_{1it}$ to be 1 if individual $i$ has a focal behavior at time $t$, and 0 otherwise. \\

We then use an SAOM to specify the co-evolution of network and behavior using a continuous-time Markov process. This enforces the standard Markovian assumption of the conditional distribution of the future being independent of the past, given the current network structure and behavioral characteristics. The model assumes that there is at most one change in the respondents' focal behavior or the number of edges in the network over small time intervals, called \textit{micro-steps} (Steglich et. al., 2010). 
We model the opportunity for any given respondent $i$ at any point in time to form or delete a tie $a_{ij}=[A]_{ij}$, for $j = 1, \dots, i-1, i+1, \dots, N$ or the respondent's focal behavior $b_{1i} = [B_1]_i$ to follow a Poisson process with different rate functions. As we assume that the income of every respondent remains constant after the policy implementation, $b_{2i} = [B_2]_i$ do not undergo any stochastic changes. These functions determine the rates at which respondents make network and behavioral decisions within a time interval. \\ 

For each respondent $i$, there is one rate function for the network, which is denoted by $\lambda^{[A]}_i$, and one for the behavioral changes, which is denoted by $\lambda^{[B]}_i$. Following Steglich et. al. (2010), we describe the rate functions between time periods $t$ and $t+1$ using the following equations
\begin{equation}\label{eqn:networkrate}
\lambda^{[A]}_i(A_t, B_t) = \rho_m^{[A]}\exp{(h_i^{[A]}(\alpha^{[A]},A_t,B_t))}
\end{equation}
\begin{equation}\label{eqn:behaviorrate}
\lambda^{[B]}_i(A_t, B_t) = \rho_m^{[B]}\exp{(h_i^{[B]}(\alpha^{[B]},A_t,B_t))}
\end{equation}
where the parameters $\rho_m^{[A]}$ and $\rho_m^{[B]}$ depend on the time period $m$ and account for periodic changes in either network or focal behavior, and the functions $h_i^{[A]}(\cdot)$ and $h_i^{[B]}(\cdot)$ capture the rate dependence on the current state of network and behavior, with respective weight parameters $\alpha^{[A]}$ and $\alpha^{[B]}$ \parencite{steglich2010}. The exact specifications for $h_i^{[A]}(\cdot)$ and $h_i^{[B]}(\cdot)$ will likely depend on the specific network and behavioral effects that we choose to include for a given network context, and which we specify later for the purpose of our current analysis. \\

The SAOM also relies on objective functions that help to determine which specific changes in network or behavior are to be made at a given micro-step. Each respondent $i$ seeks to optimize this objective function over the set of feasible changes that the respondent can take in the current time period. Snijders et. al. (2007) propose an objective function that consists of three parts, namely the evaluation functions $f_i^{[A]}$ and $f_i^{[B]}$, the endowment functions $g_i^{[A]}$ and $g_i^{[B]}$, and random disturbances $\epsilon_i^{[A]}$ and $\epsilon_i^{[B]}$ \parencite{snijders2007}. \\

The evaluation functions for the network and behavioral decisions, respectively, are parameterized by the vectors $\beta^{[A]}$ and $\beta^{[B]}$; the endowment functions are parameterized by the vectors $\gamma^{[A]}$ and $\gamma^{[B]}$, as shown in the following equations.
\begin{equation}\label{eqn:networkobj}
\text{Network decisions:} \quad f_i^{[A]}(\beta^{[A]}, A_t, B_t) + g_i^{[A]}(\gamma^{[A]}, A_t, B_t|A_{t-1},B_{t-1}) + \epsilon_i^{[A]}(A_t, B_t)
\end{equation}
\begin{equation}\label{eqn:behaviorobj}
\text{Behavioral decisions:} \quad f_i^{[B]}(\beta^{[B]}, A_t, B_t) + g_i^{[B]}(\gamma^{[B]}, A_t, B_t|A_{t-1},B_{t-1}) + \epsilon_i^{[B]}(A_t, B_t)
\end{equation}

The evaluation functions, $f_i^{[A]}$ and $f_i^{[B]}$ measure the respondents' utility based on the current state of the network and their focal behavior, with respective weight parameters $\beta^{[A]}$ and $\beta^{[B]}$. The respondents continuously strive to alter their friendship network and focal behavior to maximize their utility based on the evaluation function. \\

The endowment functions $g_i^{[A]}$ and $g_i^{[B]}$, with their respective weight parameters $\gamma^{[A]}$ and $\gamma^{[B]}$, capture the loss in utility due to a unit change in the network ties or focal behavior, which were gained earlier. 
In other words, these functions can be used to simulate scenarios where the formation and breaking of links, or the changes in focal behavior, generate asymmetric gains or losses for the respondents. In our simulation, we assume that the loss in utility is the same as the respondent's gain from a change. Hence, no endowment functions are specified in our current model.\\

The random noises $\epsilon_i^{[A]}$ and $\epsilon_i^{[B]}$ represent a portion of the respondent's preference, which is not captured by either the evaluation or endowment functions. By assuming that these random noises follow the type-1 extreme value distribution, similar to random utility models, we obtain a closed-form multinomial logit expression for the probabilities of the network and focal behavioral micro-step decisions \parencite{maddala1986limited}. Based on Snijders et. al. (2007), the choice probability that is derived from the network micro-step decisions is given as, 
\begin{equation}\label{eqn:networkprobability}
\Prob{a_{ij,t+1} = a_{ij,t} + \delta|a_t,b_t, \beta^{[A]}} = \dfrac{\exp{(f_i^{[A]}(\beta^{[A]}, a_{ij,t+1} = a_{ij,t} + \delta, b_t))}}{\displaystyle \sum_{k \in [n]\backslash i}\sum_\psi\exp{(f_i^{[A]}(\beta^{[A]}, a_{ik,t+1} = a_{ik,t} + \psi, b_t))}},
\end{equation}
where $a_{t+1}$ is the resulting network at $t+1$ when respondent $i$ at micro-step $t$ either creates a new tie, deletes an existing tie, or makes no change to the respondent's connections \parencite{snijders2007}. We model the change in ties by altering the variables $\delta$ (or $\psi$), where $\delta, \psi \in \{0,\pm 1\}$. Similarly, based on Snijders et. al. (2007), the choice probability that is derived from the focal behavioral micro-step decisions is given as, 
\begin{equation}\label{eqn:behaviorprobability}
\Prob{b_{1i,t+1} = b_{1it} + \delta|a_t,b_t, \beta^{[B]}} = \dfrac{\exp{(f_i^{[B]}(\beta^{[B]}, a_t, b_{1i,t+1} = b_{1i,t} + \delta))}}{\displaystyle \sum_\psi\exp{(f_i^{[B]}(\beta^{[B]}, a_t, b_{1i,t+1} = b_{1it} + \psi))}},
\end{equation}
where $b_{1,t+1}$ is the resulting state of focal behavior at $t+1$ when respondent $i$ at micro-step $t$ either attains, loses, or makes no changes to the respondent's focal behavior \parencite{snijders2007}. We model the change in focal behavior by altering the variables $\delta$ (or $\psi$), where $\delta, \psi \in \{0,\pm 1\}$.\\

After formulating the choice probabilities, the transition intensity matrix $Q$ can be obtained as follows,
\begin{equation}\label{eqn:intensitymatrix}
\text{$Q(a_{t+1}, b_{t+1})$} = 
\begin{cases}
\lambda^{[A]}_i\Prob{a_{ij,t+1} = a_{ij,t} + \delta|a_t,b_t} & \text{if ($a_{ij,t+1}, b_{t+1}$) = ($a_{ij,t} + \delta, b_t$)}\\
\lambda^{[B]}_i\Prob{b_{1i,t+1} = b_{1it} + \delta|a_t,b_t} & \text{if ($a_{t+1}, b_{1i,t+1}$) = ($a_t, b_{1i,t} + \delta$)} \\
\begin{aligned}
-\displaystyle \sum_{i}& \Bigl\{\sum_{j\neq i}\sum_{\delta \in \{-1, 1\}}Q(a_{ij,t+1} +\delta, b_{t+1})\\
& +\sum_{\delta \in \{-1, 1\}}Q(a_{t+1}, b_{1i,t+1}+\delta)\Bigl\}
\end{aligned}& \text{if ($a_{t+1}, b_{t+1}$) = ($a_t, b_t$)}\\
0 & \text{otherwise}
\end{cases}
\end{equation}
This models the rate of transitioning from the state ($a_t, b_t$) at micro-step $t$ to a new state ($a_{t+1}, b_{t+1}$) at micro-step $t+1$. As it is difficult to obtain a closed-form likelihood function, a simulation-based estimator, specifically a Monte Carlo Markov Chain (MCMC)-based Method of Moments (MoM) estimator, is used to obtain the parameters $\beta$, $\rho$, $\alpha$, of the rate and evaluation functions. 
The MCMC implementation of the MoM estimator uses a stochastic approximation algorithm, which is adapted from the Robbins–Monro algorithm \parencite{robbins1951}, and 
implemented in the R package \code{RSiena} \parencite{ripley2023}.\\ 

We model our rate and objective functions $h_i^{[A]}$, $h_i^{[B]}$, $f_i^{[A]}$ and $f_i^{[B]}$ from (\ref{eqn:networkrate}), (\ref{eqn:behaviorrate}), (\ref{eqn:networkobj}) and (\ref{eqn:behaviorobj}) as a weighted sum of relevant network characteristics (e.g., degree, transitivity, and
homophily based on the respondent's covariates), and behavioral characteristics (e.g., linear tendency, peer influence, and the effect of the respondent's connections on focal behavior). We denote the matrix of the network and behavior statistics computed in each time period $t$ by $S_t^{[A]}$ and $S_t^{[B]}$, which are $N \times K$ and $N \times L$ matrices of $K$ network and $L$ behavioral characteristics, respectively. We specify the functions
$h_i^{[A]}$ and $h_i^{[B]}$ from the rate functions as follows,
\begin{equation}\label{eqn:hiA}
h_i^{[A]}(\alpha^{[A]},A_t,B_t) = \sum_{q}\alpha^{[A]}_qs_{iqt}^{[A]}(A_t,B_t)
\end{equation}
\begin{equation}\label{eqn:hiB}
h_i^{[B]}(\alpha^{[B]},A_t,B_t) = \sum_{r}\alpha^{[B]}_rs_{irt}^{[B]}(A_t,B_t),
\end{equation}
where $\alpha^{[A]}_q$ and $\alpha^{[B]}_r$ measure the weights of the respective rate statistics $s_{iqt}^{[A]} = [S_t^{[A]}]_{iq}$ and $s_{irt}^{[B]} = [S_t^{[B]}]_{ir}$, which are one-dimensional vectors defined for each respondent $i$ that capture the rate dependence on the respondent's network and behavioral characteristics, and $q \subset K, r \subset L$. For simplicity, we do not include any rate effects in our study, i.e., we only rely on the basic period-dependent parameter $\rho_m^{[A]}$ and $\rho_m^{[B]}$ for period $m = 1, 2$ with constant functions $h_i^{[A]}$ and $h_i^{[B]}$ for every respondent $i$.\\

Similarly, we specify the functions $f_i^{[A]}$ and $f_i^{[B]}$ from the objective functions as follows,
\begin{equation}\label{fiA}
f_i^{[A]}(\beta^{[A]},A_t,B_t) = \displaystyle \sum_{c}\beta^{[A]}_cs_{ict}^{[A]}(A_t,B_t)
\end{equation}
\begin{equation}\label{eqn:fiB}
f_i^{[B]}(\beta^{[B]},A_t,B_t) = \displaystyle \sum_{d}\beta^{[B]}_ds_{idt}^{[B]}(A_t,B_t),
\end{equation}
where $s_{ict}^{[A]} = [S_t^{[A]}]_{ic}$ and $s_{idt}^{[B]} = [S_t^{[B]}]_{id}$  are the respective $c$-th network statistic and $d$-th behavioral statistic of respondent $i$ and $c \subset K$, $d \subset L$. We specify the following statistics for the network and behavioral effects which we use in our model. The network effects we choose to include are transitivity ($s_{i1t}^{[A]}$), the respondent $i$'s homophily effects based on the focal behavior ($s_{i2t}^{[A]}$) and income ($s_{i3t}^{[A]}$), popularity effect ($s_{i4t}^{[A]}$) and covariate-ego activity based on the focal behavior ($s_{i5t}^{[A]}$) and income ($s_{i6t}^{[A]}$), and are specified as follows \parencite{ripley2023},
\begin{enumerate}
    \item Transitivity ($s_{i1t}^{[A]}$). This measures the triads formed within the network 
    \begin{equation}\label{eqn:transitivity}
        s_{i1t}^{[A]}(a) = \sum_{j,k}a_{ijt}a_{jkt}a_{ikt}
    \end{equation}
    \item Homophily based on focal behavior ($s_{i2t}^{[A]}$) and the income ($s_{i3t}^{[A]}$)
    \begin{equation}\label{eqn:homophily}
    s_{iwt}^{[A]}(a,b) = \sum_ja_{ijt}(sim^{b_{w-1}}_{ijt}-\widehat{sim^{b_{w-1}}_t}), \quad \text{for $w \in \{2,3\}$}
    \end{equation}
    where $\widehat{sim^{b_{w-1}}_t}$ denotes the mean of all similarity scores and $sim^{b_{w-1}}_{ijt} = \frac{\Delta_t - |b_{(w-1)it}-b_{(w-1)jt}|}{\Delta_t}$ with $\Delta_t = \max_{ij}|b_{(w-1)it}-b_{(w-1)jt}|$ as the maximum observed range of the covariate $b_{w-1}$. We note that these variables take on a higher value for those respondents whose focal behavior or income is closer to that of their peers (i.e. the value of $|b_{(w-1)it}-b_{(w-1)jt}|$ is small). A higher value for $s_{iwt}^{[A]}$ is indicative of a heightened propensity towards creating ties based on similar focal behavior.
    \item Popularity effect ($s_{i4t}^{[A]}$), which is defined by the sum of the in-degrees of node $i$'s neighbors.
    \begin{equation}\label{eqn:indeg}
    s_{i4t}^{[A]}(a) = \sum_ja_{ij}\sum_ha_{hj}
    \end{equation}
    \item Covariated-related activity based on focal behavior ($s_{i5t}^{[A]}$) and the income ($s_{i6t}^{[A]}$)
    \begin{equation}\label{eqn:activity}
    s_{ilt}^{[A]}(a) = b_{mit}a_{i^+t}, \quad \text{for $(l, m) \in \{(5,1), (6,2)\}$}
    \end{equation}
\end{enumerate}
In addition to these effects, the density effect is also included in the model by default in \code{RSiena}. \\

Similar to the network effects stated above, we also model a number of important behavioral effects. These effects include the respondent $i$'s behavior tendency effect ($s_{i1t}^{[B]}$), outdegree connection ($s_{i2t}^{[B]}$), the peer influence effect i.e. social influence ($s_{i3t}^{[B]}$) and the income covariate effect ($s_{i4t}^{[B]}$). The specifications for these effects, as listed in \parencite{ripley2023}, are presented as follows:
\begin{enumerate}
    \item Behavioral tendency effect ($s_{i1t}^{[B]}$). This captures the tendency of respondents to attain or lose the focal behavior over time
    \begin{equation}\label{eqn:linearshape}
        s_{i1t}^{[B]}(a,b) = b_{1it}
    \end{equation}
    \item Outdegree effect ($s_{i2t}^{[B]}$). This measures the effect of the respondents' connections on their focal behavior
    \begin{equation}\label{eqn:outdegeffect}
        s_{i2t}^{[B]}(a,b) = b_{1it}\sum_ja_{ijt}
    \end{equation}
    \item Average peer influence effect ($s_{i3t}^{[B]}$). This measures the propensity of respondents to assimilate their focal behaviors toward their peers.
    \begin{equation}\label{eqn:avpeereffect}
        s_{i3t}^{[B]}(a,b) = a_{i^+t}^{-1}\sum_ja_{ijt}(sim^{b_1}_{ijt}-\widehat{sim^{b_1}_t}); \quad (\text{and $0$ if $a_{i^+t}=0$}),
    \end{equation}
    where the construction of $sim^{b_1}_{ijt}$ and $\widehat{sim^{b_1}_t}$ are similar to equation \ref{eqn:homophily}. We note that $s_{i3t}^{[B]}$ has a higher value for those respondents whose focal behavior is closer to that of their peers (i.e. the value of $|b_{1it}-b_{1jt}|$ is small). Hence, a positive and significant estimate of this effect indicates that respondents alter their focal behavior to match their peers and vice versa.
    \item Income covariate effect ($s_{i4t}^{[B]}$).
    \begin{equation}\label{eqn:incomecovariateeffect}
        s_{i4t}^{[B]}(a,b) = b_{2it}b_{1it}
    \end{equation}
\end{enumerate}

\newpage

\end{appendices}

\end{document}